\title{An Online Multi-unit Auction with Improved Competitive Ratio}
\author{Sourav Chakraborty\thanks{Technion, Israel. Email:
\hbox{sourav@cs.technion.ac.il}} \and Nikhil R
Devanur\thanks{Microsoft Research, Redmond, USA. Email:
\hbox{ndevanur@gmail.com}
\newline Initial part of the work was done when both the authors
were visiting Microsoft Research, India in Summer 2007}}
\date{}
\newtheorem{theorem}{Theorem}[section]
\newtheorem{lemma}[theorem]{Lemma}
\newtheorem{corollary}[theorem]{Corollary}
\newtheorem{definition}[theorem]{Definition}
\def\N{{\mathbb N}}
\def\R{{\mathbb R}}
\def\wait{{\sc WAIT}}
\def\allocate{{\sc ALLOCATE}}
\newcommand{\ignore}[1]{}
\newcommand{\beq}{\begin{equation}}
\newcommand{\eeq}{\end{equation}}
\begin{document}
\maketitle

\begin{abstract}
We improve the best known competitive ratio (from 1/4 to 1/2),
for the online multi-unit allocation problem,
where the objective is to maximize the single-price revenue.
Moreover, the competitive ratio of our algorithm tends to 1,
as the bid-profile tends to ``smoothen''.
This algorithm is used as a subroutine in designing truthful auctions
for the same setting: the allocation has to be done online,
while the payments can be decided at the end of the day.
Earlier, a reduction from  the auction design problem to the allocation problem
was known only for the unit-demand case.
We give a reduction for the general case when the bidders have
decreasing marginal utilities.
The problem is inspired by sponsored search auctions.

\end{abstract}

\section{Introduction}
%\lal{what is the main result?}
We improve the best known competitive ratio (from 1/4 to 1/2),
for the online multi-unit allocation problem, which
in turn gives a factor of 2 improvement for the online
multi-unit auction problem.
Moreover, the competitive ratio of our algorithm tends to 1,
as the bid-profile tends to ``smoothen''.
We also give a reduction from the auction problem to the
allocation problem when the bidders want multiple copies
of the item with decreasing marginal utilities for them.
Earlier, such a reduction was known only for the unit-demand case.

\begin{definition}{\bf Online Multi-unit Auction Problem for single
demand} We have for auction multiple copies of a single item, where
the copies are coming online. We have no prior knowledge of how many
copies of the item will be produced. Each bidder has a utility $u_i$
for one copy of the good, and bids a value $b_i$ at the beginning of
the auction.

The problem is to design an allocation and pricing scheme that has
the bids $b_i$ as input and has the following properties:
\begin{itemize}\item (Truthfulness) The auction mechanism should be
truthful.
\item (Perishable Good) As a new copy comes, either it is
allocated to some bidder, who wins the copy, or it is discarded.
\item The prices charged to the winning  bidders are determined at the
end of the auction, when there are no more copies left.
\end{itemize}

The goal of the auction is to maximize the revenue of the
auctioneer.
\end{definition}

The corresponding online multi-unit allocation Problem for single
demand is the following:

\begin{definition}{\bf Online Multi-unit Allocation Problem for single
demand} We want to sell multiple copies of a single item, where the
copies are coming online. We have no prior knowledge of how many
copies of the item will be produced. Each interested bidder sends
her bid $b_i$ at the beginning of the auction.

The problem is to design an allocation scheme that has the bids
$b_i$ as input and as a new copy comes, either it is allocated to
some bidder, who wins the copy, or it is discarded. All the winning
bidders are charged the same amount at the end of the auction and is
less than the bid of all the winning bidders.

The goal of the auction is to maximize the revenue of the
auctioneer.
\end{definition}

A generalization of the above problem is the multiple demand
problem.

\begin{definition}{\bf Online Multi-unit Auction Problem for
Multiple demand} Just like in the single demand problem we have for
auction multiple copies of a single item, where the copies are
coming online. But in this problem the bidders may bid for multiple
copies of the item. The bidders make multiple bids for each copy.

The problem is again to design an allocation and pricing scheme that
satisfies the all the conditions of truthfulness, perishableness of
the item and charging the price at the end. The goal is to maximize
the revenue of the auctioneer.
\end{definition}

\begin{definition} {\bf Competitive Ratio} Given the bids $\{b_i\}$ and
the number of copies of the item that is produced let $OPT$ be the
revenue of the optimum single price auction. If the expected revenue
generated by an online multi-unit auction mechanism is $REV$ then
the {\em Competitive Ratio} of the mechanism is $REV/OPT$.
\end{definition}

\begin{theorem} We design an allocation algorithm for the online
multi-unit auction problem for unit demand, that achieves a
competitive ration of $1/2$.
\end{theorem}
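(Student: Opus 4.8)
The plan is to first reduce the allocation problem to a clean one-dimensional online problem, and only then design and analyze a randomized rule for it.

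\textbf{Reduction.} Sort the bids as $b_{(1)} \ge b_{(2)} \ge \cdots$ and set $R_k = k\, b_{(k)}$, the revenue from serving the top $k$ bidders at the single price $b_{(k)}$. First I would argue that it is without loss of generality to allocate every copy we decide to keep to the highest-bid bidder not yet served: replacing a served bidder by a higher unserved one never lowers the count and never lowers the minimum winning bid, so it never lowers the final revenue. Consequently, after $n$ copies have arrived the state is fully described by the number $k \le n$ of copies we chose to allocate, the achievable revenue is exactly $R_k$, and the offline optimum at supply $n$ is $OPT(n) = \max_{j \le n} R_j$. An online allocation algorithm is then nothing but a (possibly randomized) monotone pointer $k(n)$ with $k(n) \le n$ and $k(n)-k(n-1) \in \{0,1\}$, and the theorem asks that $\mathbb{E}[R_{k(n)}] \ge \tfrac12 \max_{j\le n} R_j$ hold for \emph{every} horizon $n$ simultaneously. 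Crucially, all the $R_j$ are known from the bids up front; the only thing revealed online is the horizon $n$ (the supply), which the adversary may stop at any time.

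\textbf{Why randomization is needed, and the hedging algorithm.} I would first observe that a deterministic pointer cannot achieve any constant ratio: on a spiky profile such as one high bid and many tiny bids (giving $R_1$ large, then $R_2,R_3,\dots$ small but slowly growing), protecting the small-$n$ horizons forces the pointer to sit at $k=1$, after which the unit-speed constraint $k(n)\le n$ prevents it from ever catching up to a late, large optimum without passing through revenue ``valleys'' where the adversary can stop. The randomized algorithm I propose hedges between the two canonical pointers: with probability $\tfrac12$ run the \emph{aggressive} pointer $k(n)=n$ (serve every copy), which yields $R_n$ and hence matches $OPT(n)$ exactly at every horizon where the prefix-maximum is attained at the current index; and with probability $\tfrac12$ run a \emph{frozen} pointer that advances to some prefix-record index and then holds, drawn from a carefully chosen distribution over record indices (naturally, a randomly shifted geometric grid on the revenue scale, which is also what makes the bound degrade gracefully).

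\textbf{Analysis.} Fix a horizon $n$ and let $j^*$ be a maximizer of $R_j$ over $j\le n$, with value $V=R_{j^*}=OPT(n)$. I would split on whether $j^*=n$ or $j^*<n$. In the first case the aggressive arm alone contributes $\tfrac12 R_n = \tfrac12 V$, which already gives the bound. In the second case the optimum sits strictly in the past, so I would show the frozen arm places enough probability mass on a record index whose revenue is within a factor of the geometric grid of $V$, contributing at least $\tfrac12 V$ up to the grid slack; combining the two arms yields $\mathbb{E}[R_{k(n)}]\ge \tfrac12 V$. Finally, for the smoothness statement I would note that if consecutive ratios $R_{k+1}/R_k$ are bounded (the bid profile has no sharp drops), then $R_n$ is already close to $\max_{j\le n}R_j$, the aggressive arm alone is nearly optimal, and the grid slack vanishes, pushing the constant toward $1$; a matching spiky instance as above shows $\tfrac12$ is tight.

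\textbf{Main obstacle.} The heart of the difficulty is the irreversibility combined with the unknown horizon: discarded copies cannot be recovered and served low bidders permanently depress the price, so a single pointer cannot be simultaneously good for an early high-price/small-set optimum and a late low-price/large-set optimum. The technical crux is therefore to choose the distribution of the frozen arm so that, \emph{uniformly over all horizons $n$}, the correct prefix-record carries at least half the needed mass, while respecting the unit-speed constraint $k(n)\le n$ that couples ``how far the pointer can be'' to ``where the next record sits.'' Getting this distribution right is what turns the $1/4$ bound into $1/2$, and I expect it to be the step requiring the most care.
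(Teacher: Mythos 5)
There is a genuine gap: your proposed algorithm (with probability $1/2$ run the aggressive pointer $k(n)=n$, with probability $1/2$ freeze at a single randomly chosen record index) does not achieve a competitive ratio of $1/2$, and no choice of the freezing distribution can fix it. Consider an instance with two peaks of nearly equal revenue separated by a deep valley: $b_{(1)}=1$, then many bids of value $\epsilon_1\ll 1$ so that the revenue climbs back to $1+\delta$ at some index $b_2=(1+\delta)/\epsilon_1$, then many bids of value $\epsilon_2\ll\epsilon_1$. If the adversary stops at $n=2$, the aggressive arm earns $2\epsilon_1\approx 0$ and any frozen arm targeting a record beyond index $1$ is still en route and also earns $\approx 0$; to guarantee half of $OPT=1$ you are forced to put mass $p_1\geq 1-O(\epsilon_1)$ on freezing at index $1$. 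But then if the adversary instead stops deep in the valley after the second peak, $OPT=1+\delta$ while your expected revenue is at most $\tfrac12\bigl(p_1\cdot 1+(1-p_1)(1+\delta)\bigr)+o(1)<\tfrac12(1+\delta)$. Chaining many peaks $1,1+\delta,1+2\delta,\dots$ with deep valleys drives the ratio well below $1/2$. The underlying problem is that a two-point mixture of ``stop forever at one record'' and ``never stop'' has no trajectory that is partially committed: in a valley, every arm is either parked at an old peak or worthless, and the mass needed to protect early horizons starves the later ones. Your claim that the spiky instance shows $1/2$ is tight is also false: the paper exhibits a $2/3$-competitive algorithm for that instance and the known upper bound is $e/(e+1)$.

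The paper's algorithm avoids this by using a single randomized trajectory rather than a mixture of frozen pointers: upon reaching each peak $b_i$ it waits for a time $T$ distributed uniformly on $[0,D_i]$, where $D_i$ is the largest inter-peak gap seen so far, and the successive waiting times are coupled so that $T$ only increases (re-randomizing into $[D_{i-1},D_i]$ with probability $1-D_{i-1}/D_i$). The continuous uniform wait is the essential ingredient you are missing: it makes the probability of having overshot peak $i$ by the time the horizon lands $M'$ copies into the following valley exactly $M'/D_i$, and conditioned on overshooting the algorithm has already allocated about $M-M'/2$ copies toward the next peak $a_{i+1}$, whose revenue satisfies $a_{i+1}u_{a_{i+1}}\geq b_iu_{b_i}$. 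These two linear dependences on $M'$ cancel in the case analysis (the quantity $1+\tfrac{x^2y}{2}-xy$ in Case 1a), which is precisely what closes the bound at $1/2$ uniformly over all horizons and all peaks. Your reduction to the monotone-pointer problem and the observation that one should serve the highest remaining bidder are both correct and match the paper's framing, but the algorithmic core and its analysis would need to be replaced along these lines.
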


\begin{corollary} We have a truthful auction mechanism for the
online multi-unit auction problem, that achieves a constant
competitive ratio.
\end{corollary}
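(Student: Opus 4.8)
The plan is to first pin down the offline optimum and then reduce the online task to a single (randomized) pricing decision. Fix the realized number of copies $n$, which is unknown to the algorithm. Since all winners pay one common price and a price $p$ can be charged only to bidders who bid at least $p$, the best single-price revenue on any served set equals $\max_j\, j\cdot(\text{$j$-th largest served bid})$; serving higher bidders is never worse, so the offline optimum is $OPT(n)=\max_{k\le n} k\, b_{(k)}$, where $b_{(1)}\ge b_{(2)}\ge\cdots$ are the sorted bids. Writing $d(p)=|\{i:b_i\ge p\}|$ for the demand at price $p$, this is equivalently $OPT(n)=\max_p\, p\cdot\min(d(p),n)$. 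The whole difficulty is that the maximizing price--quantity pair depends on $n$, which is revealed only when the stream stops, yet every allocation decision is irrevocable.

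The key observation I would exploit is that a single threshold decouples the allocation from $n$. Commit to a price $p$ and allocate each arriving copy to an as-yet-unserved bidder with $b_i\ge p$ (say the highest such bidder), discarding the copy if none remains. Then for every stopping time $n$ the served set has size exactly $\min(d(p),n)$, all its members bid at least $p$, so charging $p$ is feasible, and the revenue is exactly $p\cdot\min(d(p),n)$. It therefore suffices to choose $p$ so that this quantity compares well with $OPT(n)=\max_q\, q\cdot\min(d(q),n)$ for the adversary's choice of $n$. I would draw $p$ from a distribution supported on a geometric grid of price levels and argue that, for each fixed $n$, $\mathbb{E}_p[\,p\cdot\min(d(p),n)\,]\ge \tfrac12\,OPT(n)$; since this holds for whatever $n$ the adversary picks, it yields competitive ratio $1/2$.

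The heart of the argument is the rounding analysis. Let $q^\star$ attain $OPT(n)$. Rounding $q^\star$ down to the nearest grid level $p$ loses at most a factor of two in the price while $\min(d(p),n)\ge\min(d(q^\star),n)$, which already gives the $1/2$ guarantee once the grid is aligned to $q^\star$; randomizing the grid, equivalently placing the level distribution uniformly in $\log$-price, removes the dependence on the unknown $q^\star$ and recovers the factor in expectation. I would then show that the ratio improves toward $1$ as the bid profile ``smoothens'': the only loss stems from sharp factor-two jumps in the curve $q\mapsto q\cdot\min(d(q),n)$, so when neighboring levels contribute comparably, averaging over the randomized grid captures nearly all of $OPT(n)$. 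The main obstacle I anticipate is exactly this last step---designing the price-level distribution so that the expected-revenue bound holds simultaneously for every possible horizon $n$, and quantifying the improvement as a function of the profile's smoothness---together with verifying that the greedy above-threshold allocation never wastes a copy that a smarter policy could have used.
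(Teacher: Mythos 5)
Your reduction to ``commit to one random price $p$ up front and serve everyone above it'' has a genuine gap at its central step: the claim that one can choose the distribution of $p$ so that $\mathbb{E}_p[\,p\cdot\min(d(p),n)\,]\ge \tfrac12\,OPT(n)$ holds \emph{simultaneously for every} $n$ is false, and not just for your particular grid construction but for any non-adaptive price distribution. Consider the instance with $8^j$ bids at value $2^{-j}$ for $j=0,1,\dots,k$, so that $d(2^{-j})\approx 8^j$ and the revenue curve has $k+1$ well-separated peaks of heights $4^0,4^1,\dots,4^k$. If the adversary stops the stream at $n=8^j$ then $OPT(n)\approx 4^j$, while a committed price $2^{-i}$ earns roughly $4^i$ if $i\le j$ and $4^j2^{j-i}$ if $i>j$; in either case the ratio to $OPT$ is at most $2^{-|i-j|}$. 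Summing these geometrically decaying weights over $j$ shows that any fixed distribution $\{p_i\}$ over price levels satisfies $\min_j \sum_i p_i\cdot(\mathrm{rev}_i(8^j)/4^j) = O(1/k)$, so the competitive ratio of your scheme tends to $0$ as the number of peaks grows. This is exactly the ``many peaks and valleys'' obstruction the paper points out for deterministic algorithms; it applies equally to a price randomized once and then frozen. Your single-peak intuition (one bid of $1$ plus many bids of $\epsilon$) is misleading because there the two peaks have comparable heights. Relatedly, your rounding argument is circular as stated: the grid level you want to select is the one just below $q^\star$, but $q^\star$ depends on the unknown $n$, and a log-uniform blind draw only lands near $q^\star$ with probability $1/\log(b_{\max}/b_{\min})$, not constant probability.

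The paper's route is structurally different and the adaptivity is essential. Its allocation algorithm walks up the revenue curve $l\mapsto lu_l$, pausing at each local maximum $b_i$ for a waiting time drawn uniformly from $[0,D_i]$ where $D_i$ is the largest gap between peaks seen so far (with the coupling that the waiting time only increases across phases); the factor $1/2$ then comes from a case analysis on where the stream stops relative to the current peak and valley. Truthfulness is not obtained from the allocation rule itself but from a separate reduction: Mahdian--Saberi's for unit demand, or the paper's Section 4 mechanism (random bipartition of bidders, cross-determined allocation quantities, VCG payments) for multiple demand, each of which loses only a constant or $(1-\epsilon)$ factor. If you want to salvage your approach you would need the price level to adapt downward as more copies arrive, with a randomized rule for when to drop a level --- at which point you have essentially rediscovered the problem the paper's \wait/\allocate\ algorithm is designed to solve.
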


%\lal{what are the problems?}
In the {\em Online Multi-unit Auction} problem,
we have for auction multiple copies of a single item,
where the copies are coming online.
We have no prior knowledge of how many copies of the
item will be produced.
Each bidder has a utility $u_i$ for one copy of the good,
and bids a value $b_i$ at the beginning of the auction.
As a new copy comes, either
it is allocated to some bidder, who wins the copy,
or it is discarded.
The prices charged to the winning  bidders are determined
at the end of the auction, when there are no more copies left.
The goal of the auction is to maximize the revenue of the auctioneer.
We use competitive analysis, and compare this revenue
to the optimal ``single price'' revenue, that is the optimal
revenue that could have been obtained by charging the
same  price
% \footnote{A bidder may not be charged
% more than his utility, that is, the price should be smaller
% than all the winning bids.}
for all the copies allocated,
having known the number of copies produced and
the true utilities of the bidders.
Further, we want the auction to be strategyproof, that is,
the auction is so designed that bidding truthfully (that is $b_i =u_i$)
is a dominant strategy for every bidder.

Mahdian and Saberi \cite{MS} showed that this problem can be reduced
to the {\em Online Multi-unit Allocation} problem, with a constant
factor lost in the competitive ratio. The allocation problem is
exactly as before, except that one can assume that the bidders bid
their true utility, and the allocation algorithm is also restricted
to be single priced, the price being determined at the end of the
algorithm. As before, the revenue of the algorithm is compared to
the optimum single price revenue, having known the number of copies.

If a bidder wants more than one copy of the item, then he
submits multiple bids. The allocation problem is essentially the
same. However, the auction problem is more difficult because
of truthfulness. The auction has to now consider the possibility
that the bidder lies about some subset of his bids, as opposed
to lying about the only bid in the unit-demand case.

%\lal{why is it interesting?}
The problems are inspired by sponsored search auctions, which are a
major source of revenue for search engines like Google, Yahoo and
MSN. The copies correspond to search queries and the bidders are the
advertisers. The auction problem considered here is also a natural
extension of the line of work on digital goods auction: from
unlimited supply (\cite{GH2, GH3, GHKSW, GHW}) to limited supply
(\cite{abrams, BNDHT, borgs}), to unknown supply (\cite{MS} and this
paper). For the sponsored search auction setting, the multiple
demand case is more realistic and our reduction for this case is of
significant interest.

It can be easily seen that the competitive ratio of any deterministic
algorithm for the allocation problem is arbitrarily small.
So it is actually surprising that
a randomized algorithm can even get a constant competitive ratio.
The reason for this difficulty is that the revenue of the algorithm,
as a function of the number of copies allocated can have many ``peaks''
and ``valleys''.  For any deterministic algorithm, an adversary can
make sure that the algorithm either ends up in a valley, or is stuck
on a small peak while the optimum is at a larger peak elsewhere.
Our allocation algorithm actually solves the following general online problem:
let $f:\N \rightarrow \R$ be any ``sub-linear''\footnote{
what we mean is $f(n)/n$ is decreasing. A function of the form
$f(n) = cn^a$ satisfies this condition if and only if $a$ is at most $1$.}
function of natural numbers.
Consider an online algorithm that is trying to choose $n$ to maximize $f(n)$.
The algorithm starts at $n=0$, and at each time step, can choose to
stay at $n$, or go to $n+1$. The competitive ratio of the algorithm is
\[ \frac{f(n)}{max_{1\leq i \leq m} \{f(i)\}} \]
where $m$ is the total number of time steps.
We give an algorithm that achieves
a competitive ratio of $1/2$ for this problem.

The simplicity of our algorithm is quite appealing.
Whenever the algorithm is at a peak, it has to decide if it
has to stay at the peak, or try to get to the next one.
What the algorithm does is to simply wait at the current peak
for a period of time chosen uniformly at random between
1 and the maximum distance between peaks seen so far.

%\lal{how do you analyse it?}
The proof of the competitive ratio relies on case analysis
since the optimal revenue and the expected revenue of the algorithm
vary depending on the total number of copies seen.
%\lal{so... give me an idea}
A good idea of how the analysis goes can be had by considering the following
instance: suppose there is one bid of $1$ and many bids of $\epsilon \ll 1$.
In this case the algorithm waits for a time chosen u.a.r between
1 and $1/\epsilon$.
If the number of copies seen is $m \leq 1/\epsilon$,
then the optimal revenue is 1, while the
expected revenue is $1-x +\frac{x^2}{2}$ (where $x = \epsilon m$),
which is at least $1/2$ when $x \leq 1$.
If $m \geq 1/\epsilon$ then the optimal revenue is $\epsilon m$,
while the expected revenue is $\epsilon m - 1/2 \geq \frac{\epsilon m}{2}.$

%\lal{why is the reduction hard?}
The reduction in \cite{MS} gives a
constant competitive ratio for the auction problem. The auction is
based on random sampling with computing optimal  ``price offers''.
But when run in an online setting, the prices offered decrease over
time, due to which a bidder might regret not getting a copy earlier
as the price decreased at a later time. The authors of \cite{MS}
take care of this situation by a clever implementation that works
only when all bidders want only one copy. It is not truthful when
the bidders can submit multiple bids. We circumvent this difficulty
by combining the random sampling technique with the VCG auction.
However, we only get an asymptotic competitive ratio, that is the
ratio tends to 1, as a certain bidder dominance parameter tends to
0.

% The auction partitions the bidders into two
% sets uniformly at random. It then allocates the copies to one set
% based on prices computed by a fictitious run of the allocation
% algorithm on the other half.
% However, this is not truthful when the bidders can submit multiple bids,
% as can be seen from the following example\footnote{
% this example is actually from \cite{}, who use it to show
% why they need a particular modification to their auction }:
% a bidder wants two copies at $4$, and the prices offered to him
% are 5 and 3,

% offers different prices
% to bidders at different times, where a bidder is truthful if he accepts
% the offer whenever it is less than his utility.
% However, a bidder may accept an offer higher than his utility,
% in case the eventual price goes down.

%\lal{what are related works?}
The problems we consider were first
studied by Mahdian and Saberi \cite{MS}. Other variants have been
considered, for instance when the supply is given while the bidders
arrive and leave  online \cite{HKP, blum}. Also there is a huge work
on digital goods and random sampling auctions. Another interesting
case is when the bidders have  constant marginal utilities for the
copies, but have daily budgets. \cite{borgs, abrams} gave an auction
for this case with known supply.
%\lal{ and open problems?}
Extending it to the online setting is an important open problem.

\section{An Algorithm for the Online Multi-unit Allocation Problem}
Without loss of generality  assuming that the bids are
$u_1 \geq u_2 \geq \dots \geq u_n$, the revenue obtained by
allocating $l$ units of the item is $lu_l$.
Let $1=a_1 < b_1 <
a_2 < b_2 < a_3 < b_3 < \dots$  be the critical points of the
function $lu_l$, that is, the function $l u_l$ is
non-decreasing as $l$ increases from
$a_i$ to $b_i$, for all $b_i < l  < a_{i+1}$ we have $b_i
u_{b_i} > l u_{l}$ and $b_i u_{b_i} \leq a_{i+1} u_{a_{i+1}} .$

%\subsection*{The Algorithm}

The algorithm is in one of two states, \allocate\ or \wait. When it
is in \allocate, it allocates the next copy of the item. When it is
in \wait, it discards the next copy.
The description of the algorithm is completed by specifying when it
transits from one state to the other.

The algorithm is initially in  \allocate. It transits from
\allocate\  to  \wait\ when the number of copies allocated ($X$) is
equal to $b_i$ for some $i$. It transits from \wait\ to \allocate\
when the number of copies discarded ($Y$) is equal to a random
variable, $T$, for waiting time.
$T$ is reset every time the algorithm transits to \wait.
$T$ is picked so that it is distributed uniformly between 0 and
$D_i$, where
$$D_i= \max_{j\leq i}(a_{j+1} - b_j)$$
(recall that $X =b_i$).
We further want to maintain the invariant that $Y$ never exceeds $T$. Equivalently, the value of $T$ can only
increase during a run of the algorithm.

We still have to specify how $T$ is picked. Because of the condition
that $T$ can only increase, we cannot pick $T$ independently every
time we transit to \wait.
Note that if $D_i = D_{i-1}$, then we don't have to change $T$ at
all.
If $D_i > D_{i-1}$, then  \begin{itemize}
\item w.p.  $\frac{D_{i-1}}{D_i}$ don't change $T$,
\item with the remaining probability pick $T$ uniformly at random
from the interval $[D_{i-1},D_i]$.
\end{itemize}
It is easy to see that the resulting $T$ is distributed uniformly in
$[0,D_i$]. Note that in case $T$ is not changed, then $Y$ is already
equal to $T$, and we transit back to \allocate\ immediately.
Equivalently, we don't transit to \wait\ at all.

\subsection*{Pseudocode for the Algorithm}
\noindent 1. initialize STATE = ALLOCATE, i=1, X=Y=T=0;\\
2. when a new copy is produced{\\
3. \hspace{0.25 in} If (STATE = ALLOCATE)\\
4. \hspace{0.5 in} Allocate the copy to the next bidder;\\
5. \hspace{0.5 in} X ++;\\
6. \hspace{0.5 in} If (X = $b_{i}$)\\
7. \hspace{.75 in} If ($D_i > D_{i-1}$) \\
8. \hspace{1 in}   With prob   $1- \frac{D_{i-1}}{D_i}$\\
9. \hspace{1.25 in}  set $T$ to a random
number from the interval $[D_{i-1},D_i]$;\\
10. \hspace{1.25 in} STATE = WAIT;\\
11. \hspace{.75 in} i ++; \\
12. \hspace{0.25 in} If (STATE = WAIT)\\
13. \hspace{.5 in} Discard the copy;\\
14. \hspace{.5 in} Y ++;\\
15. \hspace{.5 in} If (Y $= T$) \\
16. \hspace{.75 in} STATE = ALLOCATE \\
17. \hspace{0.25 in} GO TO line 2.

\section{Competitive Analysis}
In this section we show that the expected revenue of our algorithm, $ALG$,
is at least half of the optimal revenue on hindsight, $OPT$.
Let $M$ be the number of copies that is produced at the end of the
day. Let $b_i < M \leq b_{i+1}$.
\begin{description}
\item [Case 1:] If $M\leq a_{i+1}$, then  $OPT$ is $b_iu_{b_i}$.
\item[Case 2:] If $a_{i+1} < M$,  then $OPT$ is $Mu_M$.
\end{description}
Recall that $X$ is the number of items sold by the algorithm.
Therefore, $ALG = E[Xu_X]$.
We approximate $ALG$ in the two cases as follows.

\noindent In Case 1, if $X\leq b_i$ then $u_X \geq u_{b_{i}}$, and
if $X>b_i$ then $u_X \geq u_{a_{i+1}}$.
Hence we have that
\begin{equation*}
ALG \geq \Pr[X<b_i] E[X|X<b_i] u_{b_i} + \Pr[X=b_i] b_i u_{b_i} + \Pr[X>
b_i]E[X|X>b_i]u_{a_{i+1}}
\end{equation*}
Also note that by definition $a_{i+1}u_{a_{i+1}} \geq b_iu_{b_i} = OPT$.
Hence
\begin{equation*}
u_{a_{i+1}} \geq \frac{b_iu_{b_i}}{a_{i+1}} = \frac{OPT}{a_{i+1}}.
\end{equation*}
Substituting for the values of $u_{b_i}$ and $u_{a_{i+1}}$, we get
\begin{equation}\label{eq:exprev2}
\frac{ALG}{OPT}\geq  \Pr[X<b_i]E[X|X<b_i]\frac{1}{b_i} + \Pr[X=b_i] + \Pr[X>
b_i]E[X|X>b_i]\frac{1}{a_{i+1}}.\end{equation}

\noindent In Case 2,  we use the fact that $u_X \geq u_M$, so
$ALG \geq E[X] u_M$, and since  $OPT = Mu_M$,
we need to prove that  $E[X]$  is  at least $M/2$.
\begin{equation}\label{eq:exprev3}
E[X] = \Pr[X<b_i]E[X|X<b_i] + \Pr[X=b_i]b_i  + \Pr[X>
b_i]E[X|X>b_i].\end{equation}
We now give a way to calculate the various probabilities and
expectations needed.
\begin{definition} For all $i\geq 1$ let $T_i$ be the value of
the random variable $T$ chosen at phase $i$.
\end{definition}
$T_i$ is the number of number of items we plan to discard
before allocating the $(b_i +1)$-th element. Also note that $T_i$ is
distributed uniformly between $0$ and $D_{i}$,
and  $T_{i-1} \leq T_{i}$ for all $i$.
Also, it is easy to see from the description of the algorithm that
$X \geq  M - T_{i-1} $ when $X<  b_i $, and
$X = M - T_i$ when $X> b_i$.
This gives us the following lemmas.
Let $M' := M - b_i$.
\begin{lemma}\label{lem:equiv}
For all $i \geq 1$ the following statements hold,
(let $ T_0 = 0$),
\begin{enumerate}
\item $X < b_i \Leftrightarrow T_{i-1} > M'$.
\item $X = b_i$ $\Leftrightarrow$ $T_{i-1} \leq M'$ and $T_i \geq M'$.
\item $X > b_i$ $\Leftrightarrow$ $(T_{i-1} < M' \mbox{ and }T_i < M')$
$\Leftrightarrow T_i < M'$.
\end{enumerate}
\end{lemma}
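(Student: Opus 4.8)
The plan is to reduce all three equivalences to two \emph{milestone} facts about how many copies the algorithm must consume before the allocation count $X$ reaches, and then strictly exceeds, the value $b_i$. The structural point I would first extract from the pseudocode is that the running count of discarded copies $Y$ is never reset and that $T$ only increases, so the waiting accumulated across phases $1,\dots,i-1$ totals exactly $T_{i-1}$. Hence at the instant $X$ first attains $b_i$ the algorithm has allocated $b_i$ copies and discarded exactly $T_{i-1}$, while completing the wait of phase $i$ raises the total discards to $T_i$. Since at every moment the allocated plus discarded copies equal the number of copies seen, this gives the two equivalences (with $T_0=0$):
\begin{equation*}
X \ge b_i \iff M \ge b_i + T_{i-1}, \qquad X > b_i \iff M > b_i + T_i .
\end{equation*}

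Granting these, statements 1 and 3 are one-line algebra. Negating the first equivalence, $X < b_i \iff M < b_i + T_{i-1} \iff T_{i-1} > M - b_i = M'$, which is statement 1; note the forward direction is also immediate from the stated inequality $X \ge M - T_{i-1}$ valid when $X<b_i$. The second equivalence rearranges to $X > b_i \iff T_i < M - b_i = M'$, which is statement 3, and is likewise corroborated by the stated identity $X = M - T_i$ valid when $X>b_i$. The equivalence of the two displayed forms in statement 3 is exactly the monotonicity $T_{i-1}\le T_i$, since $T_i < M'$ already forces $T_{i-1} < M'$.

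Statement 2 then needs no separate argument: the events $\{X<b_i\}$, $\{X=b_i\}$, $\{X>b_i\}$ partition the sample space, so $X=b_i$ is the complement of the other two, namely $\neg(T_{i-1}>M')\wedge\neg(T_i<M')$, i.e. $T_{i-1}\le M'$ and $T_i\ge M'$. I would add a remark that the boundary configurations, where $M$ lands exactly on a milestone, are assigned consistently with these non-strict inequalities (for instance $M=b_i+T_i$ leaves $X=b_i$, matching the endpoint $T_i=M'$); in any case these are null events for the later probability computations because $T_i$ is continuously distributed.

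The step I expect to be the genuine work is the first one: justifying the two milestone equivalences directly from the algorithm's bookkeeping — verifying that discards have accumulated to precisely $T_{i-1}$ by the time $X=b_i$, and that no copy is allocated between the completion of phase $i$ and the first increment past $b_i$. This is where the invariants ``$Y$ never resets'' and ``$T$ only increases'' must be used carefully; once they are in hand, everything downstream is routine arithmetic and complementation.
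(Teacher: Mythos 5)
Your proof is correct and follows essentially the same route the paper intends: the paper offers no explicit proof, merely asserting the lemma from the bookkeeping identities $X\ge M-T_{i-1}$ when $X<b_i$ and $X=M-T_i$ when $X>b_i$, and your two milestone equivalences ($X\ge b_i \iff M\ge b_i+T_{i-1}$ and $X>b_i \iff M>b_i+T_i$) are precisely a cleaner packaging of those same facts, with parts 1 and 3 by rearrangement and part 2 by complementation. The only point worth flagging is the integer-versus-continuous treatment of $T_i$ at the boundary, which you correctly note is immaterial for the subsequent probability calculations and is a convention the paper itself glosses over.
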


\begin{lemma} \label{lem:expect}
\begin{eqnarray*}&&E[X | X < b_i] \geq M - E[T_{i-1}| T_{i-1} > M']
= M - \frac{ M' + D_{i-1}}{2}.\\
&&E[X | X > b_i] = M- E[T_i|T_i < M']
= M - \frac{\min\{M',D_{i}\}}{2}.\end{eqnarray*}
\end{lemma}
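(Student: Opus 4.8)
The plan is to reduce each conditional expectation of $X$ to a conditional expectation of a single uniform random variable, using the two deterministic relations recorded just before the statement (namely $X \geq M - T_{i-1}$ when $X < b_i$ and $X = M - T_i$ when $X > b_i$) together with the equivalences of Lemma~\ref{lem:equiv}, and then to evaluate those conditional expectations directly from the uniform laws of $T_{i-1}$ and $T_i$.

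For the first line, I would start from the fact that on the event $\{X < b_i\}$ one deterministically has $X \geq M - T_{i-1}$. Taking conditional expectations over this event gives $E[X \mid X < b_i] \geq M - E[T_{i-1} \mid X < b_i]$. By part~1 of Lemma~\ref{lem:equiv}, the conditioning event $\{X < b_i\}$ is exactly $\{T_{i-1} > M'\}$, so $E[T_{i-1} \mid X < b_i] = E[T_{i-1} \mid T_{i-1} > M']$. Since $T_{i-1}$ is uniform on $[0,D_{i-1}]$, its conditional law given $T_{i-1} > M'$ is uniform on the interval $(M', D_{i-1}]$, whose mean is $\frac{M' + D_{i-1}}{2}$; substituting yields the claimed bound.

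For the second line I would argue identically but with an equality: on $\{X > b_i\}$ one has $X = M - T_i$ exactly, so $E[X \mid X > b_i] = M - E[T_i \mid X > b_i]$, and by part~3 of Lemma~\ref{lem:equiv} the conditioning event coincides with $\{T_i < M'\}$. The one point requiring care is the evaluation of $E[T_i \mid T_i < M']$: since $T_i$ is uniform on $[0, D_i]$, I would split into the case $M' \leq D_i$, where the conditional law is uniform on $[0, M')$ with mean $M'/2$, and the case $M' > D_i$, where the conditioning event has full probability and the mean is simply $E[T_i] = D_i/2$. Both cases collapse to $\frac{\min\{M', D_i\}}{2}$, giving the stated equality.

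The only genuine subtlety — and the reason the first line is an inequality while the second is an equality — is that when the run terminates in the middle of the $(i-1)$-st waiting phase one has $X = b_{i-1} > M - T_{i-1}$, so the relation $X \geq M - T_{i-1}$ cannot be sharpened to equality across all of $\{X < b_i\}$; I would flag this explicitly to justify the direction of the inequality. Beyond that, the main things to verify are that the conditioning events under discussion have positive probability (otherwise the conditional expectations are vacuous and irrelevant to the later bounds), which holds in the regime we care about, and the elementary fact that restricting a uniform random variable to a subinterval again produces a variable uniform on that subinterval.
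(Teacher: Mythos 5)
Your proposal is correct and follows exactly the route the paper intends: the paper derives this lemma directly from the relations $X \geq M - T_{i-1}$ on $\{X < b_i\}$ and $X = M - T_i$ on $\{X > b_i\}$, the event identifications of Lemma~\ref{lem:equiv}, and the uniform law of $T_{i-1}$ and $T_i$, which is precisely your argument (your case split on $M'$ versus $D_i$ for the $\min$ and your remark on why the first line is only an inequality are correct elaborations of details the paper leaves implicit).
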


However, the probability of the events $X < b_i$, $X=b_i$ and $X>b_i$
depend upon the order of $J_i := b_i +D_{i-1}$,  $a_{i+1}$ and $M$.
So we consider all possible orders of these 3 quantities separately.
Table 1 shows the probabilities for all the cases.
\begin{table}[h]\label{Table1}
\begin{center}
\begin{tabular}{||l | c | c | c ||}
    \hline \hline
    &        &       &\\
    & \( \begin{array}{c} \Pr[X < b_i] \\
      =\Pr[T_{i-1} > M']\end{array}\) & $\Pr[X = b_i]$ &
\(\begin{array}{c} \Pr[X > b_i] \\ =\Pr[T_i < M']\end{array}\) \\
    & &           &  \\ \hline
    &        &       &\\
    Case 1a \( \begin{array} {c}J_i<M\leq a_{i+1}\\
D_{i-1} \leq M' \leq D_i \end{array} \) &    $0$   &
$ 1 - \frac{M'}{D_i}$
    &$\frac{M'}{D_i}$ \\
    &        &       & \\ \hline
    &        &       & \\
    Case 1b \( \begin{array}{c} M\leq J_i< a_{i+1}\\
              M' \leq D_{i-1} \leq D_i \end{array} \) &
$1-\frac{M'}{D_{i-1}}$
    &  $\frac{M'}{D_{i-1}}-\frac{M'}{D_{i}}$ &
  $ \frac{M'}{D_{i}} $  \\
    &        &       &\\ \hline
    &        &       &\\
    Case 1c  \( \begin{array}{c} M\leq a_{i+1}\leq J_i\\
      M' \leq D_{i-1} =  D_i \end{array} \)  &
$1-\frac{M'}{D_{i-1}}$
    & $0$    &  $\frac{M'}{D_{i-1}}$ \\
    &        &       &\\ \hline \hline
    &        &       &\\
    Case 2a  \( \begin{array}{c} J_i<a_{i+1}\leq M\\
 D_{i-1} <  D_i \leq M' \end{array} \) &    0  &   0   &   1    \\
    &        &       &\\ \hline
    &        &       &\\
    Case 2b  \( \begin{array}{c}a_{i+1}\leq M\leq J_i\\
 M' \leq D_{i-1} =  D_i \end{array} \)  &
$1-\frac{M'}{D_{i-1}}$
    & $0$    &  $\frac{M'}{D_{i-1}}$ \\
    &        &       &\\ \hline
    &        &       &\\
    Case 2c \( \begin{array}{c}a_{i+1}\leq J_i<M\\
      D_{i-1} = D_i \leq M' \end{array} \) &  0  &  0  &   1  \\
    &        &       &\\
    \hline
    \hline

    \end{tabular}
\end{center}
\caption{Probability of the event $X< b_i$, $X=b_i$ and $X>b_i$ for the six different cases}
\end{table}

\subsection{Analyzing all the Cases}
A few observations first: from Lemma \ref{lem:expect},
$$E[X | X < b_i] = M - \frac{ M' + D_{i-1}}{2} \geq \frac{M}{2}$$
since $ D_{i-1} \leq b_i = M - M'$ implies $ M' + D_{i-1} \leq M$.
And when $M' \leq D_i$,
$$E[X | X > b_i]
= M - \frac{M'}{2} =  \frac{M+b_i}{2}.$$
\noindent\textbf{Case 1a:} $[J_{i} \leq M \leq a_{i+1}]$
Set $x = \frac{M'}{D_i}$ and $y = \frac{D_i}{a_{i+1}}$.  Then $ \frac{M'}{a_{i+1}} = xy$
and $\frac{b_i}{a_{i+1}} = 1-y$.
Substituting for the probabilities and expectations in (\ref{eq:exprev2}),
$$ \frac{ALG}{OPT} \geq 1-x  + x\left(1+\frac{xy}{2} -y\right)
= 1 + \frac{x^2y}{2} - xy =: \alpha. $$
$$ \frac{d\alpha}{dx} = y(x-1) \leq 0.$$
Therefore $\alpha$ is minimized when $x =1$, and at this point,
$\alpha = 1-y/2 \geq 1/2$ since $y\leq 1$.

\noindent\textbf{Case 1b:} $[b_i \leq M \leq J_i \leq a_{i+1}]$
 As observed earlier, we have that $E[X | X < b_i] \geq \frac{ M}{2} \geq  \frac{b_i}{2} $ and
$E[X | X > b_i] = \frac{M+b_i}{2} \geq b_i. $
Setting $x= \frac{M'}{D_{i-1}}$ and using (\ref{eq:exprev2}) again,
$$\frac{ALG}{OPT} \geq
\left(1-x\right)\frac{1}{2} +
x \left( 1-\frac{D_{i-1}}{D_{i}} +
\frac{D_{i-1}}{D_{i}}  \frac{b_i }{a_{i+1}}\right).$$
It is enough to prove that $\frac{D_{i-1}}{D_{i}} \left(1 -
 \frac{b_i }{a_{i+1}} \right) \leq \frac{1}{2} $.
This follows from the fact that $a_{i+1} = b_i + D_i \geq 2 D_{i-1}$.

\noindent\textbf{Case 1c:} $[b_i \leq M \leq a_{i+1} \leq J_i]$
As before we have that $E[X | X < b_i] \geq \frac{ M}{2} \geq  \frac{b_i}{2} $ and
$E[X | X > b_i] = \frac{M+b_i}{2} \geq \frac{a_{i+1}}{2}. $
The last inequality follows because $a_{i+1} \leq J_i = b_i + D_{i-1}
\leq b_i + M$.
Plugging these back in (\ref{eq:exprev2}) gives
$ \frac{ALG}{OPT} \geq \frac{1}{2}.$

\noindent\textbf{Case 2a: $[J_i \leq a_{i+1}\leq M]$} From (\ref{eq:exprev3}) and Lemma \ref{lem:expect},
$E[X] = M - \frac{D_{i}}{2} \geq  M/2$ since $M \geq D_i$.

\noindent\textbf{Case 2b: $[a_{i+1} \leq M \leq J_i]$}
In this case, it is enough to show that both $E[X | X < b_i]$
and $E[X | X > b_i]$ are bigger than $M/2$. From Lemma \ref{lem:expect},
$E[X| X < b_i ] \geq \frac{M}{2}$.
$E[X| X > b_i ]= M -  \frac{ M'}{2} \geq \frac{M}{2}.$

\noindent\textbf{Case 2c: $[a_{i+1} \leq J_i \leq M]$} The analysis is identical
to Case 2a.

In fact, the competitive ratio of our algorithm is $1-\epsilon$ if
$\epsilon \geq \max \{\frac{D_i-1}{b_i}, \frac{D_i}{a_{i+1}} \}$.
The proof is essentially the same as above.

\section{Designing Truthful Mechanism}
%\lal{what do you do?}
Let ${\bf B} = \{1, 2, \dots, n\}$ be the set of bidders. Each
bidder can make multiple bids. We will design a truthful mechanism
which has good competitive ratio. Our mechanism will use an online
multi-unit allocation algorithm as a sub-routine. Under a
bidder-dominance assumption, the competitive
ratio of our mechanism will be $(1-\epsilon)\alpha$
where $\alpha$ is  the competitive ratio of the
allocation algorithm we use as our subroutine.

\begin{center}
\framebox{
\parbox[s]{6.0in}{
%\lal{what is the mechanism?}
{\bf The Mechanism: }  We divide the set of bidders into two groups
$S$ and $T$ by placing each bidder randomly into either of the
groups. On each set of bidders $S$ and $T$ we will have fictitious
runs of the allocation algorithm. Let the fictitious run of the
allocation algorithm on the set $S$ (respectively $T$) allocates
$x(S,k)$ (respectively $x(T,k)$) copies when $k$ copies
are produced.\\

Now when the $j$-th copy is produced, if $j$ is even we compute
$x(S,j/2)$. If at that time the number of copies allocated to
bidders in $T$ is less than $x(S,j/2)(1-6\gamma)$ then we allocate
the $j$-th copy to $T$ otherwise discard the copy. Similarly, if $j$
is odd we compute $x(T,(j+1)/2)$ and if the number of copies
allocated to bidders in $S$ is less than $x(T,(j+1)/2)(1-6\gamma)$
then we allocate the $j$-th copy to $S$ otherwise discard the
copy.\\

Finally let $x_{final}(S)$ and $x_{final}(T)$ copies are allocated
to bidders in $S$ and $T$ respectively.  The prices charged are the
VCG payments, that is, as if we ran a VCG auction to sell
$x_{final}(S)$ copies to bidders in $S$.  }}
\end{center}

Note that the even indexed copies will be allocated only to bidders
in $T$ and the odd-indexed copies will be allocated only to bidders
in $S$. But the bids of bidders in $S$ decides how many
(odd-indexed) copies will be allocated to bidders in $T$ and vice
versa. This mechanism is similar to that in \cite{GHKSW} on digital
good auction with unlimited supplies except that in \cite{GHKSW} the
bids of bidders in $S$ decides the cut off price for bidders in $T$
and vice-versa.

If $M$ is the number of copies of the item that are finally produced
we denote by $OPT = OPT({\bf B}, M)$ the revenue obtained by the
optimal single price allocation algorithm.

\begin{definition} For any price $p$ and any bidder $i$ we denote by
$n(i,p)$ the number of bids of bidder $i$ that are more than $p$.

We define the bidder dominance parameter $\eta$ as
$${\eta} = \frac{\max_{i,p}n(i,p)p}{OPT}.$$
\end{definition}

\begin{theorem}\label{thm:truthful} The above mechanism is a truthful
mechanism. If all the bids are from a finite set of prices (say Q)
and if $$\frac{1}{\eta} = \Omega \left (
\log\left(\frac{|Q|}{\delta}\right)\left(\frac{1}{\epsilon^2}\right)
\right) $$
and $\gamma = \epsilon/8$ then with probability
more than $(1-\delta)$ our mechanism guarantees a revenue of at
least $\alpha OPT(1-\epsilon)$ on expectation, where $\alpha$ is the
competitive ratio of the allocation algorithm that we use as the
subroutine.
\end{theorem}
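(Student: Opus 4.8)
The plan is to prove the two assertions separately: dominant-strategy truthfulness, which is purely structural, and then the high-probability revenue guarantee, where the random-sampling concentration does the real work.

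\emph{Truthfulness.} The key observation is that the mechanism decouples the two groups. For odd $j$ the $j$-th copy can only go to $S$, and the decision is governed by the threshold $x(T,(j+1)/2)(1-6\gamma)$, which depends only on the bids in $T$; symmetrically the copies given to $T$ are governed by $x(S,\cdot)$. Hence the total number $x_{final}(S)$ of copies that $S$ receives is a function of the bids in $T$ alone, and $x_{final}(T)$ of the bids in $S$ alone. Fixing all random coins and the bids of the other group, the supply to bidder $i$'s own group is therefore independent of $i$'s bid, so $i$ cannot manipulate it. Her only remaining influence is on the VCG allocation and payment computed within her group for that fixed supply, and multi-unit VCG is dominant-strategy incentive compatible for bidders with decreasing marginal valuations. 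Thus truthful bidding is a dominant strategy, conditionally on the coins and hence unconditionally.

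\emph{Revenue, concentration step.} For a price $p$ and a bidder set $U$, write $n(U,p)=\sum_{i\in U}n(i,p)$, so that the single-price revenue at $p$ on $U$ is $p\,n(U,p)$. Each bidder lands in $S$ independently with probability $1/2$, giving $E[n(S,p)p]=\tfrac12 n(\mathbf B,p)p$, and bidder $i$'s contribution $n(i,p)p$ is at most $\eta\,OPT$ by the definition of the bidder-dominance parameter. I would apply Hoeffding's inequality to these independent bounded summands to get $\Pr[\,|n(S,p)p-\tfrac12 n(\mathbf B,p)p|>\gamma\,OPT\,]\le 2\exp(-\Omega(\gamma^2/\eta))$, and likewise for $T$. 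A union bound over the $|Q|$ prices, together with $1/\eta=\Omega(\epsilon^{-2}\log(|Q|/\delta))$ and $\gamma=\epsilon/8$, drives the total failure probability below $\delta$. On the resulting good event the counts $n(S,p)$ and $n(T,p)$ are each within an additive $\gamma\,OPT/p$ of $\tfrac12 n(\mathbf B,p)$ for \emph{every} $p\in Q$ simultaneously; this is the precise sense in which $S$ and $T$ ``look alike,'' and it is exactly what the $(1-6\gamma)$ slack in the mechanism is built to absorb.

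\emph{Revenue, transfer and combine.} Condition on the good event and argue group by group. The fictitious run on $S$ at supply $\approx M/2$ fixes a quantity $x(S,\cdot)$ and an implicit single price $p^\star$ (the marginal bid at that quantity), with fictitious single-price revenue $\rho_S\approx p^\star x(S,\cdot)$ and $E[\rho_S]\ge\alpha\,OPT(S,\cdot)$ by the competitive guarantee of the allocation subroutine. The mechanism fills $(1-6\gamma)x(S,\cdot)$ \emph{real} copies into $T$; since $n(T,p^\star)$ is within $\gamma\,OPT/p^\star$ of $n(S,p^\star)\ge x(S,\cdot)$, there are at least $(1-6\gamma)x(S,\cdot)$ bids of value $\ge p^\star$ in $T$, so the VCG threshold price for selling that many copies to $T$ is at least $p^\star$ and the realized VCG revenue from $T$ is at least $(1-6\gamma)\rho_S$. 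Symmetrically the revenue from $S$ is at least $(1-6\gamma)\rho_T$. Finally, the concentration also gives $OPT(S,\cdot)+OPT(T,\cdot)\ge(1-2\gamma)OPT$ (the two halves, each at supply $M/2$, jointly realize the optimal single price up to the sampling error). Combining, the expected revenue is at least $\alpha(1-6\gamma)(1-2\gamma)OPT\ge\alpha(1-8\gamma)OPT=\alpha(1-\epsilon)OPT$, where the inequality $(1-6\gamma)(1-2\gamma)\ge 1-8\gamma$ is just $12\gamma^2\ge 0$ and $\gamma=\epsilon/8$.

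The delicate point, and the main obstacle, is the coupling in the last paragraph: the subroutine's guarantee is an \emph{expectation} over its own coins and is a \emph{single-price} revenue on the group it is run on, whereas the mechanism collects \emph{VCG} payments on the \emph{other} group. Bridging these requires (i) placing $p^\star$ in $Q$, or rounding to $Q$ at negligible loss so the uniform concentration applies; (ii) converting the count concentration into the VCG-threshold-price bound while accounting for the cap at $M$ copies and the integer rounding of $(1-6\gamma)x(S,\cdot)$; and (iii) verifying that selling $(1-6\gamma)$ times the planned quantity costs only a \emph{multiplicative} $(1-6\gamma)$ factor of revenue rather than an uncontrolled additive amount. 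Keeping every error term multiplicative and fitting all of them inside the single $8\gamma=\epsilon$ budget is the crux of the argument.
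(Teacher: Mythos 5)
Your proof follows the paper's own argument step for step: the same truthfulness argument (the supply to each group is determined solely by the other group's bids, then VCG within the group), the same two-stage revenue analysis (the random split preserves each half's optimum up to a $(1/2-\gamma)$ factor, and the $(1-6\gamma)$ discount together with the per-bidder cap $n(i,p)<\gamma\,x(S,\cdot)$ forces each winner's VCG payment in one half up to the fictitious price of the other), and the same final combination $\alpha(1-6\gamma)(1-2\gamma)\ge\alpha(1-8\gamma)$ with $\gamma=\epsilon/8$. The paper uses McDiarmid's inequality where you invoke Hoeffding; for independent bounded summands these coincide.

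The one substantive discrepancy is the form of your concentration bound, and as stated it breaks the transfer step. You take the deviation to be an additive $\gamma\,OPT$ in revenue units, i.e. $|n(S,p)-\tfrac12 n(\mathbf{B},p)|\le\gamma\,OPT/p$. That suffices for the first stage, where $p$ is the optimal price and $n(\mathbf{B},p)\,p\ge OPT$. But in the transfer step the relevant price is $p^\star=p(S,\lceil M/2\rceil)$, where you only know $n(S,p^\star)\,p^\star\ge x(S,\cdot)\,p^\star\ge\alpha(1/2-\gamma)OPT$. Your good event then gives $n(T,p^\star)\ge n(S,p^\star)-2\gamma\,OPT/p^\star$, and $2\gamma\,OPT/p^\star$ can be as large as roughly $(4\gamma/\alpha)\,x(S,\cdot)$, which exceeds the $6\gamma\,x(S,\cdot)$ slack whenever $\alpha<2/3$ --- in particular for the paper's own $\alpha=1/2$ allocation subroutine, so the claimed $(1-6\gamma)x(S,\cdot)$ bids above $p^\star$ in $T$ (plus the surplus needed to certify the VCG threshold) need not exist. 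The paper avoids this by bounding the deviation multiplicatively, $|n(S,p)-\tfrac12 n(\mathbf{B},p)|\le\gamma\,n(\mathbf{B},p)$, which yields $n(T,p^\star)\ge(1-4\gamma)\,n(S,p^\star)$ independently of $\alpha$; the price paid is the weaker exponent $\exp(-2\gamma^2(1/2-\gamma)/\eta)$ at these prices, absorbed into the $\Omega(\cdot)$ hypothesis on $1/\eta$. Restating your concentration in that relative form repairs the step; everything else in your outline matches the paper's proof.
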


%\lal{proof?}
In the rest of this section we will give a sketch of
the proof of the theorem. The detailed proof of the theorem is in
the Appendix. The proof is similar to that in \cite{GHKSW}.

The proof that the mechanism is truthful follows from the facts that
the number of copies allocated to each half is independent of the
number of the bids of the bidders in that half and the fact that
pricing is determined by the VCG auction.

The proof of the competitive ratio is in two stages.
%\lal{what are the stages?}
The first thing to notice is that since the bidders are split
randomly into two sets the optimal revenue we can obtain from either
of the sets is on expectation nearly half of what we can obtain from
the whole set.

The second thing is that the discounting factor of $(1-6\gamma)$
ensures that w.h.p the eventual winners in $S$ (respectively $T$)
are charged at least as much as our allocation algorithm charges
during its fictitious run on the set $T$ (respectively $S$) .

% \begin{definition} For any set of bidders $S$ let the fictitious run
% of allocation algorithm on the set $S$ decides to allocate $x(S,j)$
% copies at price $p(S,j)$ when $j$ copies are produced. Let revenue
% generated be $ALG(S,j)$.

% The revenue generated by optimal single price allocation algorithm
% after $j$ copies are produced is $OPT(S,j)$.
% \end{definition}

%\lal{why is the 1st stage argument true?}

%\lal{why is the 2nd stage argument true?}

Note that the bound on the bidder dominance gives us an upper bound
on $n(i,p)$ that is the number of bids on any bidders that is more
than $p$. This is essential for our analysis.

Since the bidders are split randomly into two sets, the optimal
revenue we can obtain from either of the sets is on expectation
nearly half that we can obtain from the whole set. Let $OPT(S,j)$
denote the revenue generated after $j$ copies are produced by a
fictitious run of the optimal single price allocation algorithm on
$S$.  By Mcdiarmid's Inequality and the bound on the bidder
dominance, with probability at least $(1-O(\delta))$ we have $
OPT(S,\lceil M/2 \rceil) > (1/2-\gamma)OPT$, where $M$ is the final
number of copies produced. Similarly we have $ OPT(T,\lfloor M/2
\rfloor) > (1/2-\gamma)OPT$.

For the second stage we again notice that since the set of bidders
was partitioned randomly the number of bids more than $p$ is w.h.p
divided evenly among the two sets $S$ and $T$. Again from the
McDiarmid's Inequality and from the bound on the bidder dominance
ratio we have that the the number of bid in $S$ that are more than
$p$ is w.h.p much more than $(1-6\gamma)$ times the number of bids
in $T$ that are more than $p$ (and vice versa).

Let $ALG(S,\lceil M/2 \rceil))$ and $ALG(T,\lfloor M/2 \rfloor)$ be
the revenue is generated by the fictitious run of our allocation
algorithm on $S$ and $T$ respectively. Now since the allocation
algorithm is $\alpha$ competitive we have that on expectation
$ALG(S,j) > \alpha OPT(S,j)$. From this it follows that with
probability at least $(1-O(\delta))$ the revenue we earned on
expectation is more than $$ ALG(S,\lceil M/2 \rceil)(1-6\gamma) +
ALG(T,\lfloor M/2 \rfloor)(1-6\gamma) > \alpha(1-6\gamma)(1-
2\gamma)OPT > \alpha(1 - 8\gamma)OPT$$

\section{Conclusion and Open Problems}
%\lal{Is it optimal?}
The optimal competitive ratio for the allocation problem is open.
\cite{MS} showed an upper bound of $e/(e+1)$ for any randomized
algorithm. The instance for which they show this upper bound is when
there is one bid of 1 and many bids of $\epsilon$. For this
particular instance, the following algorithm gets a competitive
ratio of $2/3$: with probability $1/3$, allocate just one copy and
get a revenue of 1, and with probability $2/3$, run our algorithm.
We conjecture that this algorithm can be generalized to get a $2/3$
competitive ratio. Also, a better  upper bound proof will probably
have to consider instances with multiple peaks, where the ratio of
the $D_i$'s to the $a_{i+1}$'s  is large.

%\lal{what about the auction part?}
For the auction problem, the
competitive ratio for the unit-demand case is quite small, and that
for the multiple demand case holds only asymptotically. Getting it
to a reasonably large  constant (or proving that it is impossible)
is an important open problem.

%\lal{What about budgets? coz SSA have budgets..}
The most common scenario in sponsored search auctions is that the
bidders have a constant utility for multiple copies of the item, but
with a daily budget. Our allocation algorithm works for this case as
well, but the reduction from the auction problem is not truthful.
Borgs et al \cite{borgs} give a truthful auction for the offline
case with budgets, using the standard random sampling techniques
with price offers. However, it is not clear how to extend their
auction to the online case. The difficulty is the same as that for
the multiple-demand case, that the price offers are decreasing over
time. But unlike the multiple-demand case, there is no VCG auction
for the budgets case, so our reduction does not work.

\section{Acknowledgements}
The second author would like to thank Jason Hartline for useful
discussions about the digital goods auction, and Deeparnab
Chakrabarty for comments on an earlier draft of the paper.

\bigskip
%\bigskip

\part*{Appendix}

\bigskip
%\bigskip
\appendix

\section{Proof of Theorem \ref{thm:truthful}}

We will now give the detailed proof of Theorem \ref{thm:truthful}.

Let $Allocate$ be the allocation algorithm that we use as our
subroutine. The algorithm decides to allocate $x(S,j)$ copies at
price $p(S,j)$ when $j$ copies are produced. So the $Allocate(S.j)$
generate revenue $ALG(S,j) = x(S,j)p(S,j)$.

Let the optimal single price allocation algorithm decides to
allocate $x^*(S,j)$ copies at price $p^*(S,j)$ and the optimal
revenue is $OPT(S,i) = x^*(p,i)p^*(S,i)$. Let the $Allocate(S,j)$
have a competitive ratio of $\alpha$. That is for all $S$ and $j$,
$$ALG(S,j) \geq \alpha OPT(S,j)$$ If $M$ is the number of copies of
the item that are finally produced we denote the optimal revenue as
$OPT = OPT({\bf B}, M)$.

\begin{definition}
For any price $p$ let $n(S,p)$, $n(T,p)$ and $n({\bf B}, p)$ be the
set of bids more that $p$ that are made by bidders in $S$, $T$ and
${\bf B}$ respectively.
\end{definition}

Let $Y_i$ be the indicator variable indicating whether the bidder
$i$ is in $S$ or not. Let $f_p(Y_1, \dots, Y_n)$ calculate the
number of bids more than or equal to $p$ that are in $S$, that is
$f_p(Y_1, \dots, Y_n) = n(S,p)$. Note that since the bidders are
randomly placed in $S$ or $T$ we have
$${E}[f_p(Y_1, Y_2, \dots, Y_n)] = \frac{n({\bf B},p)}{2}$$ Let $c_i$
is the maximum change in the value of $f_p$ if we change the value
of $Y_i$. Note that $c_i$ is equal to the number of bids of bidder
$i$ that are more than $p$, that is  $c_i = n(i, p)$. But from our
assumption we have $$ \frac{1}{\eta} < \frac{OPT}{n(i,p)p}$$ So
$n(i, p) < \eta OPT/p$. Hence $$\sum c_i^2 < \frac{\eta OPT}{p}(\sum
c_i) = \frac{\eta OPT}{p}n({\bf B},p)$$

By McDiarmid's Inequality we have for a fixed $p$
\begin{equation} \label{Eq:McDiarmid}
\Pr\left[\left|\frac{n({\bf B},p)}{2}-n(S,p)\right|
> \gamma n({\bf B},p)\right] < \exp\left(\frac{-2\gamma^2n({\bf B},p)^2}{\sum c_i^2}\right)
< \exp\left(\frac{-2p\gamma^2n({\bf B},p)}{\eta
OPT}\right)\end{equation}

\begin{lemma} With probability at least $(1-2|Q|\exp(-2\gamma^2/\eta))$
$$OPT(S,\lceil M/2 \rceil) + OPT(T,\lfloor M/2
\rfloor) > (1 -2\gamma)OPT$$
\end{lemma}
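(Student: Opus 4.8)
The plan is to lower-bound the optimum of each half by the revenue that a single fixed price raises there. Let $p^*$ denote an optimal price for the whole instance, so that $OPT = p^*\min(n({\bf B},p^*),M)$; since all bids lie in $Q$ we may take $p^*\in Q$. Running price $p^*$ on the half $S$ with $\lceil M/2\rceil$ copies, and on $T$ with $\lfloor M/2\rfloor$ copies, immediately gives $OPT(S,\lceil M/2\rceil)\geq p^*\min(n(S,p^*),\lceil M/2\rceil)$ and $OPT(T,\lfloor M/2\rfloor)\geq p^*\min(n(T,p^*),\lfloor M/2\rfloor)$. Hence it suffices to control the two counts $n(S,p^*)$ and $n(T,p^*)$ at the single price $p^*$; in fact no union bound over all of $Q$ is needed for this particular lemma, only concentration at $p^*$.

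Next I would invoke (\ref{Eq:McDiarmid}) at $p=p^*$. The decisive observation is that $p^*\,n({\bf B},p^*)\geq OPT$ always holds, with equality when $n({\bf B},p^*)\leq M$ and strictly when $n({\bf B},p^*)>M$, so the exponent in (\ref{Eq:McDiarmid}) is at most $-2\gamma^2/\eta$ and the failure probability is below $\exp(-2\gamma^2/\eta)\leq 2|Q|\exp(-2\gamma^2/\eta)$, which already yields the stated bound (the factor $2|Q|$ is harmless slack, matching the union bound used in the second stage of the theorem). Because $n(S,p^*)+n(T,p^*)=n({\bf B},p^*)=:N$, the single two-sided deviation event simultaneously forces both $n(S,p^*)$ and $n(T,p^*)$ into $[(1/2-\gamma)N,\,(1/2+\gamma)N]$; this is the event I condition on.

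Finally I would finish with a short case analysis according to the three capping configurations, writing $c_S=\lceil M/2\rceil$, $c_T=\lfloor M/2\rfloor$. If neither count exceeds its cap the two contributions sum to exactly $N$; if both exceed their caps they sum to $c_S+c_T=M$; either way the sum is at least $\min(N,M)=OPT/p^*$. The delicate configuration, which I expect to be the main obstacle, is the mixed one where exactly one half is capped: here I would use $n(S,p^*)+n(T,p^*)=N$ to conclude the other half lies strictly below its cap, so it retains its full contribution $\geq(1/2-\gamma)N\geq(1/2-\gamma)\min(N,M)$, while the capped half contributes its cap, which is itself at least $(1/2-\gamma)\min(N,M)$; summing gives $\geq(1-2\gamma)\min(N,M)$, and multiplying by $p^*$ recovers $(1-2\gamma)OPT$. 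The one bookkeeping nuisance is that when the smaller cap $\lfloor M/2\rfloor$ is the binding one, replacing it by $M/2$ costs an additive $O(1)$; this is a lower-order term, negligible for the large supplies $M$ relevant to Theorem~\ref{thm:truthful} (explicitly, it is absorbed into the $\gamma$-slack once $\gamma M\geq 1/2$).
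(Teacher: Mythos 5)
Your proof is correct and follows essentially the same route as the paper: McDiarmid's inequality applied at the optimal whole-instance price $p^*$, using $p^*\,n({\bf B},p^*)\geq OPT$ to bound the exponent by $-2\gamma^2/\eta$, and then converting the $(1/2\pm\gamma)$ concentration of the bid counts into a revenue bound for each half. The only differences are cosmetic: you observe that a union bound over $Q$ is unnecessary because $p^*$ is a single fixed price (the paper union-bounds over $Q$ and over the two halves anyway, which is where the stated factor $2|Q|$ comes from, so your bound is only stronger), and you spell out the capping case analysis and the $\lfloor M/2\rfloor$ rounding that the paper leaves implicit.
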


\begin{proof} Note that $n^*({\bf B},p)p > OPT$. From Equation \ref{Eq:McDiarmid}   we have
that for a fixed $p$ if $p = p^*({\bf B},M)$ then
$$\Pr\left[\left|\frac{n({\bf B},p^*({\bf B},M))}{2}-n(S,p^*({\bf B},M))\right|
> \gamma n({\bf B},p^*({\bf B},M))\right] < \exp\left(\frac{-2\gamma^2}{\eta}\right)$$

Now since $p$ takes values from the set $Q$ so by union bound we
have for any $p = p^*({\bf B},M)$ with probability at least
$(1-|Q|exp(-2\gamma^2/\eta))$ we have

$$\left|\frac{n({\bf B},p^*({\bf B},M))}{2}-n(S,p^*({\bf B},M))\right|
> \gamma n({\bf B},p^*({\bf B},M))$$
That is, with probability at least $(1-|Q|\exp(-2\gamma^2/\eta))$
there are more that $(1/2-\gamma)n({\bf B},p^*({\bf B},M))$ bids in
$S$ are more than $p^*({\bf B},M)$, for any $p^*({\bf B},M)$.

Recall that $x^*(S,\rceil M/2 \rceil)$ is the optimal allocation to
bidders in $S$ when $\rceil M/2 \lceil$ . So we have

$$OPT(S, \lceil M/2 \rceil) > \left(\frac{1}{2}-\gamma\right)OPT$$
Similarly with probability at least $(1-|Q|\exp(-2\gamma^2/\eta))$
we have $$OPT(T, \lfloor M/2 \rfloor) >
\left(\frac{1}{2}-\gamma\right)OPT$$
\end{proof}

\begin{corollary} \label{cor:lowerx} With probability at least
$(1-2|Q|\exp(-2\gamma^2/\eta))$ we have the following two
inequalities $$x(S,\lceil M/2 \rceil) > \alpha\left(\frac{1}{2} -
\gamma\right)\frac{OPT}{p(S, \lceil M/2 \rceil}$$

$$x(T,\lfloor M/2 \rfloor) > \alpha\left(\frac{1}{2} -
\gamma\right)\frac{OPT}{p(S, \lfloor M/2 \rfloor}$$
\end{corollary}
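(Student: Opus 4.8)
The plan is to derive the corollary directly from the preceding lemma by invoking the $\alpha$-competitiveness of the allocation subroutine together with the definition of its revenue. First I would recall that the proof of the lemma in fact establishes two separate one-sided bounds, each holding with probability at least $(1 - |Q|\exp(-2\gamma^2/\eta))$, namely $OPT(S, \lceil M/2 \rceil) > (1/2 - \gamma)\, OPT$ and $OPT(T, \lfloor M/2 \rfloor) > (1/2 - \gamma)\, OPT$. These are statements about the randomness of the partition of ${\bf B}$ into $S$ and $T$, and they are exactly the inputs that will be fed into the competitiveness guarantee.

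Next I would apply the assumption that the subroutine is $\alpha$-competitive, i.e. $ALG(S,j) \geq \alpha\, OPT(S,j)$ for every $j$, specialized to $j = \lceil M/2 \rceil$. Combined with the lemma's bound this gives, on the good event, $ALG(S, \lceil M/2 \rceil) \geq \alpha\, OPT(S, \lceil M/2 \rceil) > \alpha(1/2 - \gamma)\, OPT$. I would then unfold the definition $ALG(S,j) = x(S,j)\, p(S,j)$ and divide through by the positive price $p(S, \lceil M/2 \rceil)$, which immediately yields the first displayed inequality. The identical chain with the roles of $S$ and $T$ swapped and $j = \lfloor M/2 \rfloor$ produces the second inequality.

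Finally I would close by a union bound: each of the two inequalities fails only on an event of probability at most $|Q|\exp(-2\gamma^2/\eta)$, so both hold simultaneously with probability at least $1 - 2|Q|\exp(-2\gamma^2/\eta)$, matching the claimed bound.

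The one point requiring care is the interplay of two distinct sources of randomness. The high-probability bounds from the lemma concern the random split into $S$ and $T$, whereas the competitive ratio $\alpha$ is a guarantee in expectation over the internal coins of the allocation algorithm. One must therefore be explicit that $x(S,j)$ and $p(S,j)$ (hence $ALG(S,j)$) are read as the expected allocation and price of the subroutine, so that $ALG(S,j) \geq \alpha\, OPT(S,j)$ can be applied pointwise on the good event of the partition. Beyond this bookkeeping, the derivation is a routine substitution and division, with no genuine obstacle.
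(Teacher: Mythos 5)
Your proposal is correct and matches the paper's (implicit) derivation exactly: the paper states the corollary without a written proof precisely because it follows from the lemma's two one-sided bounds combined with $ALG(S,j) = x(S,j)p(S,j) \geq \alpha\, OPT(S,j)$ and a division by the price, with the same union bound over the two events. Your remark about the two sources of randomness (the partition versus the subroutine's coins) is a fair point of care that the paper glosses over; note also that the paper's second displayed inequality writes $p(S,\lfloor M/2\rfloor)$ where your derivation correctly produces $p(T,\lfloor M/2\rfloor)$, which appears to be a typo in the statement.
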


From Equation \ref{Eq:McDiarmid} and Corollary \ref{cor:lowerx} we
see that for any fixed $p$ if $p = p(S, \lceil M/2 \rceil)$ then

$$\Pr\left[\left|\frac{n({\bf B},p)}{2}-n(S,p)\right|
> \gamma n({\bf B},p)\right] < \exp\left(\frac{-2\gamma^2p(n({\bf B},p))}{\eta OPT}\right)
< \exp\left( \frac{-2\gamma^2(1/2 - \gamma)}{\eta}\right)$$ Using
union bound we obtain that with probability $( 1 -
2|Q|\exp(-2\gamma^2(1/2 - \gamma)/\eta) -
2|Q|\exp(-2\gamma^2/\eta))$ we have for all $p = p(S, \lceil M/2
\rceil)$
$$\left|\frac{n({\bf B},p)}{2}-n(S,p)\right| > \gamma n({\bf B},p)$$
and for all $p = p(T,\lfloor M/2 \rfloor)$ we have
$$\left|\frac{n({\bf B},p)}{2}-n(T,p)\right| > \gamma n({\bf B},p)$$
For $p = p(S,\rceil M/2 \lceil)$ we have
\begin{equation}\label{eq:nip} n(i, p) < \eta OPT/p < \eta\frac{x(S,
\lceil M/2 \rceil)}{(1/2 -\gamma)\alpha} < \gamma x(S, \lceil M/2
\rceil)\end{equation}

The algorithm decides to allocate $x(S,\lceil M/2 \rceil)$ copies to
bidders in $S$ at price $p(S,\lceil M/2 \rceil)$. So by our
mechanism we allocate $(1-6\gamma)x(S,\lceil M/2 \rceil)$ copies to
bidders in $T$. By the above inequalities we know that with
probability at least $(1 - 4|Q|\exp(-2\gamma^2(1/2-\gamma)/\eta))$
$$n(T, p(S,\lceil M/2 \rceil)) > (1-6\gamma) n(S, p(S,\lceil M/2
\rceil)) + 2\gamma n(S, p(S,\lceil M/2 \rceil))$$ Thus there are at
least $2\gamma n(S, p(S,\lceil M/2 \rceil))$ losing bid in $T$ that
bids more than $p(S,\lceil M/2 \rceil)$. From Equation \ref{eq:nip}
we see that no bidder has more than $\gamma n(S, p(S,\lceil M/2
\rceil))$ bids above $p(S,\lceil M/2 \rceil)$. So by the VCG auction
pricing system each winner in $T$ pays at least $p(S,\lceil M/2
\rceil)$ per copy. So the revenue we get from $T$ is at least
$$p(S,i)(1-6\gamma)x(S,\lceil M/2 \rceil) = ALG(S,\lceil M/2 \rceil)(1-6\gamma)$$
Similarly the revenue we get from $S$ is at least

$$ALG(T,\lfloor M/2 \rfloor)(1-6\gamma)$$
So with probability $(1 - 4|Q|\exp(-2\gamma^2(1/2-\gamma)/\eta))$
our revenue earned is at least

$$ ALG(S,\lceil M/2 \rceil)(1-6\gamma) + ALG(T,\lfloor M/2 \rfloor)(1-6\gamma) > \alpha(1-6\gamma)(1- 2\gamma)OPT > \alpha(1 - 8\gamma)OPT$$
So if $\gamma = \epsilon/8$ and $(2\gamma^2(1/2 - \gamma)/\eta))
> \log(4|Q|/\delta)$ then with probability $(1-\delta)$ the total revenue earned
on expectation is at least $\alpha(1-\epsilon)OPT$.

\end{document}